\documentclass[superscriptaddress,prx,amsmath,amssymb,nofootinbib,twocolumn]{revtex4-2}
\usepackage[T1]{fontenc}
\usepackage{amsthm}
\usepackage{graphicx}
\usepackage[percent]{overpic}
\usepackage{times}
\usepackage{hyperref}
\hypersetup{
colorlinks=true,
linkcolor=red,
citecolor=red,}
\usepackage{textcomp}
\usepackage{enumerate}
\usepackage{multirow}
\usepackage{tabularx}
\usepackage[mathscr]{euscript}
\usepackage{mathtools}
\usepackage{mhchem}
\usepackage[mathscr]{euscript}
\usepackage{dsfont}
\usepackage{amsfonts}
\usepackage[normalem]{ulem}
\usepackage{centernot}
\usepackage{stmaryrd}
\usepackage{longtable}
\usepackage{enumitem,kantlipsum}
\usepackage{array}
\usepackage{wrapfig}
\usepackage{tabu}
\usepackage{graphicx}
\usepackage{placeins}
\usepackage{enumitem}
\usepackage{dutchcal}
\usepackage{tabularx}
\usepackage[export]{adjustbox}
\usepackage{color,soul}
\usepackage{longtable}
\usepackage[dvipsnames]{xcolor}
\usepackage{float}

\usepackage[caption=false]{subfig}
\usepackage{adjustbox}

\usepackage{tikz}
\usetikzlibrary{quantikz}

\newtheorem{definition}{Definition}

\newtheorem{lemma}{Lemma}
\newtheorem{theorem}{Theorem}
\newtheorem*{theorem*}{Theorem}

\theoremstyle{definition}

\newcommand{\ketbra}[1]{ | #1 \rangle\!\langle #1 |}

\newcommand{\one}{\leavevmode\hbox{\small1\normalsize\kern-.33em1}}

\begin{document}

\title{Sheaf-Theoretic Preparation Contextuality}

\author{Tom Williams}
\affiliation{School of Informatics, Quantum Software Lab, University of Edinburgh, United Kingdom}
\author{Mina Doosti}
\affiliation{School of Informatics, Quantum Software Lab, University of Edinburgh, United Kingdom}
\author{Farid Shahandeh}
\affiliation{Department of Computer Science, Royal Holloway, University of London, United Kingdom}

%\date{\today}

\begin{abstract}
We introduce a preparation-dual notion of contextuality, formulated as an obstruction to stochastic extension.
In parallel with the sheaf-theoretic formulation of measurement contextuality, preparation contextuality arises when locally specified preparation statistics cannot be extended to a single global response matrix compatible with all source contexts.
Whereas measurement contextuality concerns the incompatibility of restriction maps (marginalisation), the preparation setting requires stochastic extension of partial conditioning data, which is inherently non-unique.
We identify minimal structural and preparation compatibility conditions on admissible extension matrices and show that they enforce a rigid product form.
This leads to a notion of preparation contextuality in which the absence of any admissible global response representation witnesses contextuality, while preparation compatibility identifies the cases in which this obstruction is nontrivial.
The framework is formulated explicitly in matrix form and illustrated by a quantum-mechanical example exhibiting preparation contextuality.
\end{abstract}

\maketitle

\section{Introduction}
Contextuality has emerged as a central nonclassical feature of quantum theory, capturing the impossibility of explaining certain experimental statistics by a single underlying classical description \cite{Kochen1967,Peres1991,Mermin1993,Cabello1996}.
It plays a foundational role in quantum information, where it underlies nonlocality and Bell inequalities \cite{Bell1964,Fine1982,Brunner2014}.
Various mathematical approaches to contextuality have been developed, including operational formulations \cite{Spekkens2005,Harrigan2010}, logical approaches \cite{Abramsky2012}, sheaf-theoretic formulations \cite{Abramsky2011}, contextuality-by-default \cite{Dzhafarov2017}, and graph- and hypergraph-theoretic approaches \cite{Cabello2014,Acin2015}.

In the standard treatment, contextuality is associated with measurements.
Empirical data are obtained from different measurement contexts, and noncontextuality corresponds to the existence of a single global assignment of outcomes whose restrictions reproduce the observed statistics \cite{Abramsky2011}.
Equivalently, measurement contextuality can be characterised as the failure to extend a family of compatible local distributions to a global one, closely related to Fine's joint-distribution criterion in Bell scenarios \cite{Fine1982}.
In the sheaf-theoretic formulation, contextuality arises as an obstruction to global section existence under canonical restriction by marginalisation.
This perspective is closely related to earlier topos-theoretic work of Butterfield and Isham, who showed that the Kochen--Specker theorem can be reformulated as the non-existence of global sections of a suitable presheaf \cite{Isham1998}.

In this work we develop a complementary, preparation-dual notion of contextuality within the sheaf-theoretic framework.
Existing approaches to preparation contextuality, most notably Spekkens' operational formulation~\cite{Spekkens2005,Kunjwal2015,Schmid2018}, characterise noncontextuality in terms of ontological models constrained by operational equivalences \cite{Harrigan2010}.
While powerful, these approaches do not formulate preparation contextuality as an obstruction to extending local preparation data to a single global object, in the manner that the sheaf-theoretic treatment does for measurement contextuality~\cite{Abramsky2011}.
Our aim is to identify and formalise precisely this extension structure for preparation scenarios.

The key structural distinction is that, although both measurement and preparation contextuality can be formulated as extension problems, the passage from global to local data is fundamentally different.
In the measurement case, local statistics are obtained from global data by canonical marginalisation.
In the preparation case, they arise through stochastic averaging over unobserved preparation degrees of freedom.
The corresponding completion maps are therefore not fixed: there are generally many inequivalent ways to extend a partial preparation specification to a global one.
This non-uniqueness is not merely a technical complication.
In the absence of a canonical completion rule, one must impose structural constraints on admissible extension matrices.
No analogous step is required in the measurement framework, where restriction maps are determined a priori.

We formulate preparation contextuality as an obstruction to stochastic extension.
An empirical model is preparation noncontextual if there exists an admissible extension map together with a single global response matrix reproducing the observed local preparation statistics; otherwise it is preparation contextual.
Admissible extension matrices are defined by two minimal structural requirements: input independence and compositionality.
These conditions rigidly constrain the form of completion, forcing all admissible extension matrices to factor across sources.
Preparation contextuality arises when this global response matrix fails to exist.

Throughout this paper, we adopt an explicit matrix formulation that renders the underlying structure transparent and makes the extension problem a question of stochastic matrix feasibility.
This perspective clarifies the duality between restriction and extension and facilitates comparison with operational and factorisation-based approaches to contextuality.
The general framework is illustrated by a quantum-mechanical example inspired by Pusey–Barrett–Rudolph-type constructions \cite{Pusey2012}, which exhibits preparation contextuality.
The PBR theorem has stimulated a substantial body of work on the status of the quantum state in ontological models \cite{Leifer2013,Barrett2014,Leifer2014}, including analyses of preparation independence and related structural assumptions \cite{Mansfield2014}.

\section{Sheaf-theoretic (measurement) contextuality}
Before turning to preparations, we briefly recall how the sheaf-theoretic formulation of measurement contextuality can be written as an explicit linear-algebraic extension problem \cite{Abramsky2011}.
Subsequent work has developed this framework in several directions, including logical Bell inequalities \cite{Abramsky2012}, cohomological obstructions to contextuality \cite{AbramskyMansfield2011}, and quantitative measures such as the contextual fraction \cite{Abramsky2017}.

Let $X$ be a finite set of measurements, each with outcome set $O$, and let $\mathcal C$ be a cover of $X$.
The elements of $\mathcal C$, called contexts, specify the structure of the scenario under consideration by indicating which joint measurements are admissible.
This structure can be captured by a linear transformation as follows.

Denote by $s_C$ a tuple of outcomes for measurements in a context $C \in \mathcal C$ and call it a \textit{local section}.
The space of all such outcome assignments is $O^{C}$\footnote{Here $O^{C}:=\{\,s:C\to O\,\}$ denotes the set of all functions assigning an outcome in $O$ to each measurement in the context $C$.}.
Similarly, a \textit{global section} $s_X$ is an assignment of outcomes to all measurements in $X$ and their space is $O^X$.
The causal structure of the scenario is then encoded in the \textit{incidence} matrix
$M_{C\mid X}$,
\begin{equation}\label{eq:incidence_matrix}
M_{C\mid X}(s_C \mid s_X) = \mathbf 1[(s_X)|_C = s_C].
\end{equation}
Thus $M_{C\mid X}$ simply ``forgets'' the outcomes of measurements in $X\setminus C$ by summing over all global assignments $s_X$ compatible with a given local assignment $s_C$.
The incidence matrix also relates local contexts via its transitive property.
In particular, if $E\subseteq F \subseteq X$, then, 
\begin{equation}\label{eq:comp_IM}
 M_{E \mid X} \!=\! M_{E \mid F}M_{F\mid X}.
\end{equation}
For two contexts $C,C'\in\mathcal{C}$, this implies $M_{(C\cap C')\mid C}M_{C\mid X}=M_{(C\cap C')\mid C'}M_{C'\mid X}$.

Operationally, a preparation $p$ together with a choice of context $C$ produces a distribution over local sections; we denote it by
\begin{equation}
E_{C\mid p} \in \Delta(O^{C}),
\end{equation}
where $\Delta(O^{C})$ is the simplex of probability distributions on $O^C$.
The family of such outcome distributions for a measurement cover, $\{E_{C\mid p}\}$, is known as an \textit{empirical model}. In this paper, we refer to this family as an empirical \textit{measurement} model to distinguish it from the preparation framework introduced later.
Measurement noncontextuality is then the claim that there exists a single global distribution $D_{X\mid p}\in\Delta(O^{X})$ which respects the structure of the scenario and reproduces all observed context-wise distributions as $E_{C\mid p} = M_{C\mid X} D_{X\mid p}$.
These linear equations are combined into one,
\begin{equation}\label{eq:meas-nc}
    E_p = M_{X} D_{X\mid p},
\end{equation}
where $E$ and $M_{X}$ are the stacks of empirical model and incidence matrices, respectively.

The compositional property of Eq.~\eqref{eq:comp_IM} then implies a \textit{compatibility} constraint on local contexts. 
More specifically, whenever a global section $D_{X\mid p}$ exists, and two contexts overlap, the statistics on the intersection agree.
This can be written in terms of context statistics as,
\begin{equation}\label{eq:no-sig-matrix}
    E_{(C\cap C')\mid p} \;=\; M_{(C\cap C')\mid C}\,E_{C\mid p} \;=\; M_{(C\cap C')\mid C'}\,E_{C'\mid p},
\end{equation}
and it is commonly referred to as the \textit{no-signalling} condition.
It is thus clear that a sheaf-theoretic noncontextual operational theory is non-signalling.
Equation~\eqref{eq:no-sig-matrix} enables a sheaf-theoretic interpretation of the noncontextuality condition in Eq.~\eqref{eq:meas-nc}:
In a noncontextual operational theory, the distributions on local sections can be glued together in a consistent way, yielding a distribution on the global section.

Let us illustrate the above framework with an example, the Abramsky--Brandenburger--Bell model~\cite{Abramsky2011}.
Let the measurements be $X=\{a,a',b,b'\}$, each with outcomes $O=\{0,1\}$.
Suppose the jointly measurable contexts are
\begin{equation}
\mathcal C=\bigl\{\{a,b\},\{a',b\},\{a,b'\},\{a',b'\}\bigr\}.
\end{equation}
Ordering local sections in columns as $(0,0),(1,0),(0,1),(1,1)$, a possible empirical model can be given in the explicit probability table,
\[
\begin{array}{c|cccc}
      & (0,0) & (1,0) & (0,1) & (1,1)\\\hline
(a,b)   & \tfrac12 & 0 & 0 & \tfrac12\\
(a',b)  & \tfrac38 & \tfrac18 & \tfrac18 & \tfrac38\\
(a,b')  & \tfrac38 & \tfrac18 & \tfrac18 & \tfrac38\\
(a',b') & \tfrac18 & \tfrac38 & \tfrac38 & \tfrac18
\end{array}.
\]
It is straightforward to verify that this family is compatible (no-signalling) according to Eq.~\eqref{eq:no-sig-matrix}, and hence a candidate for glueing.

The above empirical model has a quantum realization using two qubits prepared in the Bell state
\begin{equation}
p = \ket{\Phi_+}=\frac{1}{\sqrt2}\bigl(\ket{00}+\ket{11}\bigr),
\end{equation}
followed by  projective measurements in the orthonormal bases,
\begin{equation}
    \begin{aligned}    
        &a=b=\{\ket0,\ket1\}, \\
        & a'=\{\ket{a'_0},\ket{a'_1}\},
        & \ket{a'_0}=c\ket0+s\ket1,
        & \ket{a'_1}=-s\ket0+c\ket1,\\
        & b'=\{\ket{b'_0},\ket{b'_1}\},
        & \ket{b'_0}=c\ket0-s\ket1,
        & \ket{b'_1}=s\ket0+c\ket1
    \end{aligned}
\end{equation}
where $c:=\cos\pi/6$ and $s:=\sin\pi/6$.
For each context $C=\{x,y\}\in\mathcal C$ and $(i,j)\in O^C$, the empirical model is given by the Born rule,
\begin{equation}
E_{C|p}(i,j)=\bra{\Phi_+}\bigl(\Pi^{(x)}_i\otimes \Pi^{(y)}_j\bigr)\ket{\Phi_+},
\end{equation}
where $\Pi^{(x)}_i=\ket{x_i}\!\bra{x_i}$.
However, it was shown that the empirical model of Abramsky--Brandenburger--Bell cannot be consistently extended to a distribution over global sections as in Eq.~\eqref{eq:meas-nc}~\cite{Abramsky2011}, hence it is sheaf-theoretically measurement contextual.
 
\section{Sheaf-theoretic preparation contextuality}

\subsection{Framework}

We now turn to the preparation-dual setting, in which contextuality is probed by varying preparation procedures rather than measurement contexts.
To isolate the structure associated with preparations, we fix a single measurement $m$ with outcome set $O$, and regard all nonclassical features as arising from the manner in which preparation devices can be controlled and combined.

Central to our analysis is a \emph{source}: a preparation device equipped with a finite classical control interface. Choosing a value of this classical control specifies a particular deterministic \emph{preparation instance} of the device.
Preparation instances thus label operationally distinct procedures that can be selected prior to measurement, without making any assumptions about the underlying physical theory describing the prepared system.
Operationally, a preparation instance plays a role dual to that of a measurement outcome: whereas a measurement outcome is the observed value of a random variable in a measurement, a preparation instance labels which preparation procedure is realized.

We write $Y$ for the finite set of sources available in the experiment, and $I$ for the finite set of preparation instances associated with each source.
In general, not all families of sources can be implemented together in a single physical arrangement.
For example, consider the devices that prepare a qubit system in $Z$ and $X$ bases, respectively.
The $Z$ device can prepare the instances $\ketbra{0}$ or $\ketbra{1}$ deterministically, whereas the $X$ device can prepare the instances $\ketbra{-}$ or $\ketbra{+}$.
It is, however, not possible to devise a preparation device that outputs one of $\{\ketbra{0},\ketbra{1},\ketbra{-},\ketbra{+}\}$ deterministically.
Instead, only certain collections of instances admit a joint implementation as alternative preparation instances of a single preparation.
In other words, there are families of sources that can be realized together as coarse-grainings of a single underlying preparation device with a refined classical control.
A trivial example of such a device is the disjoint preparations of a bipartite system as $Z_A$ and $X_B$.
In analogy with measurements, the joint preparability structure is given by a cover $\mathcal S$ of $Y$, whose elements $\Gamma \subseteq Y$ are referred to as \emph{source contexts}.

For a given source context $\Gamma \in \mathcal S$, the experimenter selects a tuple of preparation instances $\sigma_\Gamma \in I^\Gamma$
corresponding to a choice of classical control for each source in the context.
We define this tuple to be the \textit{local preparation section}.
A \textit{global preparation section} is then defined as a tuple $\sigma_Y\in I^Y$.

Upon subsequently performing the measurement $m$, the experimenter observes conditional outcome statistics
\begin{equation}
E_{m \mid \Gamma}(o \mid \sigma_\Gamma),
\qquad
o \in O.
\end{equation}
Sheaf-theoretic preparation contextuality will arise precisely when these local statistics cannot be extended to a single global preparation model ``compatible'' with all source contexts.
We will shortly clarify what compatibility means in our framework.
However, let us first make the above ideas concrete via an example which exhibits sheaf-theoretic preparation contextuality.

As a preparation–dual analogue of the Bell scenario, consider the PBR arrangement~\cite{Pusey2012}.
There are two local laboratories, Alice and Bob, each equipped with two preparation procedures, denoted $a,a'$ for Alice and $b,b'$ for Bob.
Each procedure has two preparation instances labelled $i\in\{0,1\}$.
The joint preparation contexts are defined as,
\begin{equation}
\Gamma \in \bigl\{\{a,b\},\{a,b'\},\{a',b\},\{a',b'\}\bigr\}.
\end{equation}
On each run, one procedure is chosen at each site together with instance labels $(i,j)\in\{0,1\}^2$, and a fixed two–qubit measurement $m$ is performed.

In the quantum realization, Alice and Bob each have two sources, $Z$ and $X$, so that their local preparation instances correspond to single–qubit pure states,
\begin{equation*}
\begin{aligned}
\rho^{a}_0=\rho^{b}_0   &= |0\rangle\langle 0|, &\qquad \rho^{a}_1=\rho^{b}_1   &= |1\rangle\langle 1|,\\
\rho^{a'}_0=\rho^{b'}_0  &= |+\rangle\langle +|, &\qquad \rho^{a'}_1 =\rho^{b'}_1  &= |-\rangle\langle -|,
\end{aligned}
\end{equation*}
where $| \pm \rangle=(|0\rangle\pm|1\rangle)/\sqrt2$.
Thus the joint preparations are product states,
\begin{equation*}
\rho^{\{x,y\}}_{ij} = \rho^x_i \otimes\rho^y_j, \quad \text{for all } x\in\{a,a'\} \text{ and } y\in\{b,b'\}.
\end{equation*}
The PBR measurement is given by the four-outcome POVM $\{M_k\}_k$ whose effects are the rank-one projectors onto the normalized vectors proportional to the vectors,
\begin{align*}
    &|\xi_1\rangle = |01\rangle + |10\rangle, \quad
    |\xi_2\rangle = |0-\rangle + |1+\rangle, \nonumber\\
    &|\xi_3\rangle = |+1\rangle + |-0\rangle, \quad
    |\xi_4\rangle = |+-\rangle + |-+\rangle.
\end{align*}
With these specifications, the empirical model is obtained via $E_{m|\Gamma}(k\mid i,j)=\mathrm{Tr}\!\bigl[M_k\,\rho^{\Gamma}_{ij}\bigr]$ as,
\begin{equation}\label{eq:E-PBR}
    E =
    \left(
    \begin{array}{cccc|cccc|cccc|cccc}
    0 & \tfrac12 & \tfrac12 & 0 &
    \tfrac14 & \tfrac14 & \tfrac14 & \tfrac14 &
    \tfrac14 & \tfrac14 & \tfrac14 & \tfrac14 &
    \tfrac12 & 0 & 0 & \tfrac12 \\
    
    \tfrac14 & \tfrac14 & \tfrac14 & \tfrac14 &
    0 & \tfrac12 & \tfrac12 & 0 &
    \tfrac12 & 0 & 0 & \tfrac12 &
    \tfrac14 & \tfrac14 & \tfrac14 & \tfrac14 \\
    
    \tfrac14 & \tfrac14 & \tfrac14 & \tfrac14 &
    \tfrac12 & 0 & 0 & \tfrac12 &
    0 & \tfrac12 & \tfrac12 & 0 &
    \tfrac14 & \tfrac14 & \tfrac14 & \tfrac14 \\
        
    \tfrac12 & 0 & 0 & \tfrac12 &
    \tfrac14 & \tfrac14 & \tfrac14 & \tfrac14 &
    \tfrac14 & \tfrac14 & \tfrac14 & \tfrac14 &
    0 & \tfrac12 & \tfrac12 & 0
    \end{array}
    \right).
\end{equation}
Here, the columns are ordered in four consecutive blocks corresponding to the contexts \(\{a,b\}\), \(\{a,b'\}\), \(\{a',b\}\), \(\{a',b'\}\), with lexicographic order \((0,0),(0,1),(1,0),(1,1)\) within each block.
Rows are ordered by outcomes $o\in\{1,2,3,4\}$.

In parallel with measurement noncontextuality, we take preparation noncontextuality to mean that there exists a single global response matrix, from which all local empirical preparation models are obtained.
Such a putative global model assigns outcome statistics to every global preparation instance and is described by a stochastic \textit{response matrix},
\begin{equation}
D_{m\mid Y}(o\mid\sigma_Y),
\qquad
\sigma_Y\in I^Y.
\end{equation}
The predicted outcome statistics for the context $\Gamma$ are thus obtained as $E_{m\mid\Gamma}= D_{m\mid Y}S_{Y\mid\Gamma}$. 
We call $S_{Y\mid\Gamma}$, which is a stochastic matrix that plays a role analogous to the incident matrix of Eq.~\eqref{eq:incidence_matrix}, the \emph{extension matrix}.
These linear equations are also combined into one,
\begin{equation}\label{eq:prep-nc}
    E_{m}= D_{m\mid Y}S_{Y}.
\end{equation}
The extension matrix lifts distributions over local preparation sections to distributions over global preparation sections by specifying how the causal structure of preparations is implemented.
Operationally, $S_{Y\mid\Gamma}$ also encodes how preparations for sources outside the context $\Gamma$ are described.

However, there is a sharp contrast between the extension matrix  $S_{Y\mid\Gamma}$ and the incidence matrix  $M_{C\mid X}$.
While the incidence matrix is canonically determined by the physics of the scenario, the extension matrix is not, and in general, there may be many inequivalent ways to complete a partial preparation assignment.
Before asking whether a global response matrix exists, we must therefore specify what compositional properties we expect from the extension matrix.
We impose minimal structural requirements on admissible families of extension matrices, expressing the idea that extension should represent a genuine refinement of preparation control rather than an arbitrary stochastic rewriting.

Suppose $Y$ and $I$ are finite sets of sources and instances, respectively, and let $W\subseteq V\subseteq U\subseteq Y$.
Our first requirement is \emph{input independence}.
This means that the distributions assigned to instances outside a local preparation section do not depend on the instances inside that section.
This also ensures that the empirical preparation model for a given context does not depend on the empirical preparation models of unspecified contexts, so that contextual behaviour cannot be artificially built into the completion rule itself.
This is arguably a reasonable minimal assumption, as it allows us to define what a context is in the first place.
\begin{definition}[Input independence]\label{def:input-independence}
An extension matrix $S_{U\mid V}$ is \emph{input independent} if there exists a distribution
\begin{equation}
\mu_{U\setminus V}\in\Delta(I^{U\setminus V})
\end{equation}
such that for all $\sigma_V\in I^V$ and $\sigma_U\in I^U$,
\begin{equation}\label{eq:ctx-ind-form}
S_{U\mid V}(\sigma_U\mid\sigma_V)
=
\mathbf 1[(\sigma_U)|_V=\sigma_V]\;
\mu_{U\setminus V}((\sigma_U)|_{U\setminus V}).
\end{equation}
\end{definition}

Our second requirement is \emph{compositionality}, which means extending in stages yields the same distribution as extending in a single step, ensuring that refinement is independent of arbitrary choices of intermediate implementation.

\begin{definition}[Compositionality]\label{def:compositionality}
A family $\{S_{U\mid V}\}$ is \emph{compositional} if for all $W\subseteq V\subseteq U$,
\begin{equation}\label{eq:compositionality-app}
    S_{U\mid W}=S_{U\mid V}\,S_{V\mid W}.
\end{equation}
\end{definition}
\noindent The compositionality above is parallel to that of incidence matrices in Eq.~\eqref{eq:comp_IM}; see Appendix~\ref{app:stf} for a formal interpretation of this structure.
Together, these conditions express that the stochastic extension behaves as a coherent refinement procedure, independent of staging and of the values of previously fixed preparation instances.
We refer to a family of extension matrices $\{S_{U|V}\}$ satisfying input independence and compositionality as an \emph{admissible extension family}.
These structural requirements rigidly constrain the form of admissible extension matrices.

\begin{theorem}[Product-form extension theorem]
\label{thm:product-prior}
Assume $\{S_{U\mid V}\}$ is input independent and compositional.
Then there exist single-site distributions $\{\mu_p\in\Delta(I)\}_{p\in Y}$ such that for all $V\subseteq U\subseteq Y$,
\begin{equation}\label{eq:product-prior}
    S_{U\mid V}(\sigma_U\mid\sigma_V) = \mathbf 1[(\sigma_U)|_V=\sigma_V]\, \prod_{p\in U\setminus V} \mu_p(\sigma_p).
\end{equation}
In particular, for extensions from context $\Gamma\in\mathcal{S}$ to $Y$,
\begin{equation}\label{eq:product-prior2}
    S_{Y\mid\Gamma}(\sigma_Y\mid\sigma_\Gamma) = \mathbf 1[(\sigma_Y)|_\Gamma=\sigma_\Gamma]\, \prod_{p\in Y\setminus\Gamma} \mu_p(\sigma_p).
\end{equation}
\end{theorem}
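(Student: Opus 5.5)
The plan is to first use input independence (Definition~\ref{def:input-independence}) to reduce every extension matrix to a single "residual" distribution, then use compositionality (Definition~\ref{def:compositionality}) to turn these into multiplicative relations among the residual distributions, and finally show that adding a single source always contributes the same single-site factor $\mu_p$.

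\textbf{Step 1 (reduce to residual distributions).} For each pair $V\subseteq U$, input independence gives a distribution $\mu^{U,V}\in\Delta(I^{U\setminus V})$ with $S_{U\mid V}(\sigma_U\mid\sigma_V)=\mathbf 1[(\sigma_U)|_V=\sigma_V]\,\mu^{U,V}(\sigma_{U\setminus V})$. The superscripts record that, a priori, this distribution may depend on both $U$ and $V$.

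\textbf{Step 2 (compositionality becomes multiplicativity).} Take $W\subseteq V\subseteq U$ and substitute the form from Step 1 into $S_{U\mid W}=S_{U\mid V}S_{V\mid W}$. The sum over $\sigma_V$ collapses, because the indicator in $S_{U\mid V}$ forces $\sigma_V=(\sigma_U)|_V$. This yields
\begin{equation*}
\mu^{U,W}(\sigma_{U\setminus W})=\mu^{U,V}(\sigma_{U\setminus V})\,\mu^{V,W}(\sigma_{V\setminus W}).
\end{equation*}
Now write $U\setminus V=\{p_1,\dots,p_k\}$ and iterate this identity along the chain $V=U_0\subset U_1\subset\cdots\subset U_k=U$, where $U_j=U_{j-1}\cup\{p_j\}$. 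This gives
\begin{equation*}
\mu^{U,V}(\sigma_{U\setminus V})=\prod_{j=1}^{k}\mu^{U_j,U_{j-1}}(\sigma_{p_j}),
\end{equation*}
so everything reduces to single-site increments. The case $k=0$ is trivial, since then $S_{U\mid U}$ is the identity.

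\textbf{Step 3 (the main obstacle: context-independence of increments).} I must show that $\nu_p^V:=\mu^{V\cup\{p\},V}$ does not depend on $V$. For $p,q\notin V$, I compute $\mu^{V\cup\{p,q\},V}$ in two ways, adding $p$ then $q$, or $q$ then $p$. Step 2 gives
\begin{equation*}
\nu_p^V(\sigma_p)\,\nu_q^{V\cup\{p\}}(\sigma_q)=\nu_q^V(\sigma_q)\,\nu_p^{V\cup\{q\}}(\sigma_p).
\end{equation*}
Summing over $\sigma_q$ and using normalisation gives $\nu_p^V=\nu_p^{V\cup\{q\}}$. Any $V\not\ni p$ is connected to $\emptyset$ by removing elements one at a time, so by induction on $|V|$ we get $\nu_p^V=\nu_p^\emptyset=:\mu_p$. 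The only care needed is whether $V=\emptyset$ is a permitted argument of the family. If it is not, I would instead fix any base set and run the same induction; the marginalisation trick is unchanged.

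\textbf{Step 4 (conclusion).} Substituting $\mu_p$ into the product of Step 2 gives Eq.~\eqref{eq:product-prior}. The special case $V=\Gamma$, $U=Y$ gives Eq.~\eqref{eq:product-prior2}. As a byproduct, the product does not depend on the chosen ordering of $U\setminus V$.
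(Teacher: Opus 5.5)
Your proof is correct, and Steps 1, 2 and 4 follow the paper's proof: substitute input independence into compositionality, collapse the sum via the indicator, obtain multiplicativity, and iterate along a chain of single-site additions.

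The real difference is your Step 3. The paper writes the residual distribution of $S_{U\mid V}$ as $\mu_{U\setminus V}$, i.e.\ as depending only on the difference set. Defining $\mu_{p_i}:=\mu_{V_i\setminus V_{i-1}}=\mu_{\{p_i\}}$ then makes each single-site factor context-independent by notation. Chain-independence is dismissed in one line via $\mu_{A\cup B}=\mu_A\mu_B$.

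You instead read input independence literally, as a property of each matrix with its own distribution $\mu^{U,V}$. You then prove that $\nu_p^V=\mu^{V\cup\{p\},V}$ does not depend on $V$ by comparing the two chains through $V\cup\{p\}$ and $V\cup\{q\}$ and marginalising over $\sigma_q$. This is exactly what is needed to justify the paper's claim that the factorisation does not depend on the chosen chain. It also shows that the product form comes from compositionality along different chains, not along a single one. The paper's version is shorter, but it holds only under the convention that residual distributions are indexed by $U\setminus V$ alone. Yours proves the theorem under the weaker, per-matrix reading of Definition~\ref{def:input-independence}.

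Your caveat about $V=\emptyset$ is unnecessary, since the family is indexed by all $V\subseteq U\subseteq Y$. In any case, your identity $\nu_p^V=\nu_p^{V\cup\{q\}}$ links any two sets $V,V'\not\ni p$ through $V\cup V'$.
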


\begin{proof}
Fix $W\subseteq V\subseteq U$. Using Eq.~\eqref{eq:ctx-ind-form} in the compositionality condition Eq.~\eqref{eq:compositionality-app} gives, for all $\sigma_W\in I^W$ and $\sigma_U\in I^U$,
\begin{equation}
    \begin{multlined}
        \mathbf 1[(\sigma_U)|_W=\sigma_W]\; \mu_{U\setminus W}(\sigma_U|_{U\setminus W}) \\
        = \sum_{\sigma_V\in I^V} \mathbf 1[(\sigma_U)|_V=\sigma_V]\;
        \mathbf 1[(\sigma_V)|_W=\sigma_W] \\
        {}\times \mu_{U\setminus V}(\sigma_U|_{U\setminus V})\;
        \mu_{V\setminus W}(\sigma_V|_{V\setminus W}) .
    \end{multlined}
\end{equation}
The indicator functions force $\sigma_V=\sigma_U|_V$, so the sum collapses to
\begin{equation}
    \begin{aligned}
        &\mathbf 1[(\sigma_U)|_W=\sigma_W]\; \mu_{U\setminus W}((\sigma_U)|_{U\setminus W}) \\
        &\qquad
        =
        \mathbf 1[(\sigma_U)|_W=\sigma_W]\; \mu_{U\setminus V}((\sigma_U)|_{U\setminus V})\; \mu_{V\setminus W}((\sigma_U)|_{V\setminus W}),
    \end{aligned}
\end{equation}
and hence
\begin{equation}\label{eq:multiplicative}
    \mu_{U\setminus W}(\sigma_{U\setminus V},\sigma_{V\setminus W})
    =
    \mu_{U\setminus V}(\sigma_{U\setminus V})\; \mu_{V\setminus W}(\sigma_{V\setminus W}),
\end{equation}
for all $\sigma_{U\setminus V}\in I^{U\setminus V}$ and $\sigma_{V\setminus W}\in I^{V\setminus W}$.

Now fix $V\subseteq U$ and choose any chain
\begin{equation}
V=V_0\subset V_1\subset\cdots\subset V_n=U,
\qquad
V_{i+1}=V_i\cup\{p_{i+1}\}.
\end{equation}
Repeated application of Eq.~\eqref{eq:multiplicative} yields
\begin{equation}
\mu_{U\setminus V}(\sigma_{U\setminus V})
=
\prod_{i=1}^n \mu_{V_i\setminus V_{i-1}}(\sigma_{p_i}).
\end{equation}
Defining $\mu_{p_i}:=\mu_{V_i\setminus V_{i-1}}\in\Delta(I)$, this factorisation is independent of the chosen chain, since \eqref{eq:multiplicative} implies $\mu_{A\cup B}=\mu_A\,\mu_B$ for disjoint $A,B$.
Substituting this product form back into \eqref{eq:ctx-ind-form} establishes \eqref{eq:product-prior}.
\end{proof}

Given an empirical preparation model $\{E_{m\mid\Gamma}\}_{\Gamma\in\mathcal S}$, we are interested in admissible extension families that induce consistent restrictions on overlaps of source contexts.
Operationally, this expresses preparation compatibility: the effective description on $\Gamma\cap\Gamma'$ should not depend on whether it is obtained via $\Gamma$ or via $\Gamma'$.
\begin{definition}[Preparation compatibility]
We say the empirical preparation model $\{E_{m\mid\Gamma}\}_{\Gamma\in\mathcal S}$ is \emph{preparation compatible} if and only if there exists an admissible family of extension matrices $\{S_{U\mid V}\}$ such that for all $\Gamma,\Gamma'\in\mathcal S$,
\begin{equation}\label{eq:prep-comp}
E_{m\mid\Gamma}\,S_{\Gamma\mid \Gamma\cap\Gamma'}
=
E_{m\mid\Gamma'}\,S_{\Gamma'\mid \Gamma\cap\Gamma'}.
\end{equation}
\end{definition}

Equation~\eqref{eq:prep-comp} states that the induced outcome statistics on the shared sources $\Gamma\cap\Gamma'$ are independent of which larger source context is used.
In the sheaf-theoretic measurement framework, so-called generalized no-signalling expresses the compatibility of probability distributions on overlaps of measurement contexts~\cite{Abramsky2011}. In Bell scenarios this coincides with causal no-signalling, but in general it is a structural consistency condition on marginal distributions rather than a dynamical constraint. Eq.~\eqref{eq:prep-comp} plays the analogous mathematical role for preparation scenarios: it enforces compatibility of the induced statistics on overlapping source contexts, now mediated by extension matrices rather than marginalisation. Unlike the measurement case, however, this condition does not correspond to a physical no-signalling requirement, but simply expresses coherence between the empirical model and the chosen extension structure.

With these ingredients in place, we can now state the preparation-dual notion of contextuality.
The central question is whether the observed statistics on each source context can be reproduced by a single global response matrix for some admissible extension structure.

\begin{definition}[Preparation contextuality]\label{def:prep-cont}
An empirical preparation model $\{E_{m\mid\Gamma}\}_{\Gamma\in\mathcal S}$ is 
\emph{preparation noncontextual} if and only if there exists an admissible family  of stochastic extension matrices $\{S_{U\mid V}\}$ and a global response matrix $D_{m\mid Y}$ such that
\begin{equation}
E_{m}
=
D_{m\mid Y}\,S_{Y},
\end{equation}
otherwise it is preparation contextual.
\end{definition}
Although Definition~\ref{def:prep-cont} applies to arbitrary empirical preparation models, the substantive cases are the preparation compatible ones. If no admissible extension family satisfies Eq.~\eqref{eq:prep-comp}, contextuality is already witnessed at the level of compatibility. Conversely, any factorisation $E_m = D_{m\mid Y}S_Y$ automatically yields compatibility for the corresponding extension family, as shown in Lemma~\ref{lem:Fact-PrepComp}. Unlike in the measurement case, where compatibility is fixed by canonical marginalisation, a preparation model may be compatible with respect to multiple admissible extension families, or none.

\begin{lemma}
\label{lem:Fact-PrepComp}
Let $\{E_{m|\Gamma}\}_{\Gamma\in\mathcal S}$ be an empirical preparation model with an admissible family of extension matrices $\{S_{U|V}\}$ and a global response matrix $D_{m|Y}$ such that, for every
$\Gamma\in\mathcal S$,
\begin{equation}
E_{m|\Gamma}=D_{m|Y}S_{Y|\Gamma}.
\end{equation}
Then the empirical preparation model is preparation compatible.
In particular, for all $\Gamma,\Gamma'\in\mathcal S$,
\begin{equation}
E_{m|\Gamma}S_{\Gamma|\Gamma\cap\Gamma'}
=
E_{m|\Gamma'}S_{\Gamma'|\Gamma\cap\Gamma'} .
\end{equation}
\end{lemma}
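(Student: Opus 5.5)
The plan is to substitute the factorisation into both sides and use compositionality (Definition~\ref{def:compositionality}) to reduce each side to the same expression $D_{m\mid Y}S_{Y\mid\Gamma\cap\Gamma'}$. Since the family $\{S_{U\mid V}\}$ in the hypothesis is admissible by assumption, exhibiting this equality for it is exactly what preparation compatibility (Eq.~\eqref{eq:prep-comp}) requires. No further choice of extension family is needed.

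Fix $\Gamma,\Gamma'\in\mathcal S$ and set $W:=\Gamma\cap\Gamma'$, so that $W\subseteq\Gamma\subseteq Y$ and $W\subseteq\Gamma'\subseteq Y$. First I substitute the hypothesis $E_{m\mid\Gamma}=D_{m\mid Y}S_{Y\mid\Gamma}$ and use associativity of matrix multiplication:
\begin{equation*}
E_{m\mid\Gamma}S_{\Gamma\mid W}=D_{m\mid Y}\bigl(S_{Y\mid\Gamma}S_{\Gamma\mid W}\bigr).
\end{equation*}
Compositionality, applied to the chain $W\subseteq\Gamma\subseteq Y$, gives $S_{Y\mid\Gamma}S_{\Gamma\mid W}=S_{Y\mid W}$. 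Hence the left-hand side equals $D_{m\mid Y}S_{Y\mid W}$. The identical argument for $\Gamma'$, using the chain $W\subseteq\Gamma'\subseteq Y$, shows that $E_{m\mid\Gamma'}S_{\Gamma'\mid W}=D_{m\mid Y}S_{Y\mid W}$ as well. Equating the two expressions yields Eq.~\eqref{eq:prep-comp}.

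The only point needing care is that the matrices $S_{\Gamma\mid W}$, $S_{\Gamma'\mid W}$ and $S_{Y\mid W}$ actually belong to the family. An admissible family is indexed by all nested pairs $V\subseteq U\subseteq Y$, including $V=\emptyset$ when $\Gamma\cap\Gamma'=\emptyset$. In that edge case $I^{\emptyset}$ is a singleton, so the compositionality identity still applies verbatim.

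As a consistency check, Theorem~\ref{thm:product-prior} makes the identity explicit. Every $S_{U\mid V}$ copies $\sigma_V$ and independently samples each remaining site from the fixed distribution $\mu_p$. Composing the extension from $W$ to $\Gamma$ with the extension from $\Gamma$ to $Y$ therefore samples the sites in $\Gamma\setminus W$ and then those in $Y\setminus\Gamma$ from the same product measure $\prod_{p\in Y\setminus W}\mu_p$. This reproduces $S_{Y\mid W}$ directly, independently of whether $\Gamma$ or $\Gamma'$ was used as the intermediate context.
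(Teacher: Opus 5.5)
Your proposal is correct and is essentially the paper's proof. Both substitute $E_{m|\Gamma}=D_{m|Y}S_{Y|\Gamma}$ and apply compositionality along $\Gamma\cap\Gamma'\subseteq\Gamma\subseteq Y$ (and likewise for $\Gamma'$), reducing each side to $D_{m|Y}S_{Y|\Gamma\cap\Gamma'}$; your remarks on the empty-intersection case and the product-form consistency check are harmless extras not needed for the argument.
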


\begin{proof}
Using the assumed global representation and compositionality of the extension matrices, we have
\begin{equation}
E_{m|\Gamma}S_{\Gamma|\Gamma\cap\Gamma'}
=
D_{m|Y}S_{Y|\Gamma}S_{\Gamma|\Gamma\cap\Gamma'}
=
D_{m|Y}S_{Y|\Gamma\cap\Gamma'}.
\end{equation}
Similarly,
\begin{equation}
E_{m|\Gamma'}S_{\Gamma'|\Gamma\cap\Gamma'}
=
D_{m|Y}S_{Y|\Gamma'}S_{\Gamma'|\Gamma\cap\Gamma'}
=
D_{m|Y}S_{Y|\Gamma\cap\Gamma'}.
\end{equation}
Hence
\begin{equation}
E_{m|\Gamma}S_{\Gamma|\Gamma\cap\Gamma'}
=
E_{m|\Gamma'}S_{\Gamma'|\Gamma\cap\Gamma'} ,
\end{equation}
as required.
\end{proof}

It is often useful to consider a stronger, possibilistic notion of contextuality.
Given an empirical model $E$, its possibilistic reduction $\overline E$ is obtained by mapping all nonzero probabilities to one retaining only the support of the empirical statistics.
In turn, stochastic matrices and matrix multiplication are replaced by Boolean matrices and Boolean matrix multiplication, respectively. 

\begin{theorem}[Possibilistic contextuality implies probabilistic contextuality]\label{thm:poss-prob}
If an empirical model is possibilistically preparation contextual, then it is also preparation contextual in the probabilistic sense.
\end{theorem}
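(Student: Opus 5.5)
We argue by contraposition: assuming the model is preparation noncontextual in the probabilistic sense, we show it is possibilistically noncontextual. By Definition~\ref{def:prep-cont}, there is then an admissible family $\{S_{U\mid V}\}$ and a stochastic response matrix $D_{m\mid Y}$ with $E_m = D_{m\mid Y} S_Y$. By Theorem~\ref{thm:product-prior}, each $S_{U\mid V}$ has the product form \eqref{eq:product-prior} with single-site distributions $\mu_p\in\Delta(I)$. The plan is to push this factorisation through the support map $x\mapsto \mathbf 1[x>0]$, written $\overline{\,\cdot\,}$. The resulting Boolean matrices $\overline{D}_{m\mid Y}$ and $\{\overline{S}_{U\mid V}\}$ should then witness possibilistic noncontextuality for $\overline{E}$.

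The key algebraic fact is that the support map is a semiring homomorphism from $(\mathbb R_{\ge 0},+,\cdot)$ to the Boolean semiring $(\{0,1\},\vee,\wedge)$. A sum of nonnegative reals is positive iff some summand is positive, and a product is positive iff all factors are. Hence, for nonnegative matrices $A,B$,
\begin{equation}
\overline{AB}=\overline{A}\,\overline{B},
\end{equation}
with Boolean product on the right. Applying this entrywise to $E_{m\mid\Gamma}=D_{m\mid Y}S_{Y\mid\Gamma}$ for every $\Gamma\in\mathcal S$ gives $\overline{E}_m=\overline{D}_{m\mid Y}\,\overline{S}_Y$.

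It remains to check that the reduced objects are admissible in the possibilistic sense, i.e.\ Boolean analogues of input independence and compositionality.

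\begin{itemize}
\item \textbf{Compositionality.} This transfers immediately by the homomorphism property: $\overline{S}_{U\mid W}=\overline{S_{U\mid V}S_{V\mid W}}=\overline{S}_{U\mid V}\,\overline{S}_{V\mid W}$.
\item \textbf{Input independence.} From \eqref{eq:product-prior},
\begin{equation}
\overline{S}_{U\mid V}(\sigma_U\mid\sigma_V)=\mathbf 1[(\sigma_U)|_V=\sigma_V]\prod_{p\in U\setminus V}\overline{\mu}_p(\sigma_p).
\end{equation}
Each $\overline{\mu}_p$ is the indicator of a nonempty subset of $I$, since $\mu_p$ is a normalised distribution. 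This is exactly the Boolean form of Eq.~\eqref{eq:ctx-ind-form}.
\item \textbf{Boolean stochasticity.} Every column of $\overline{D}_{m\mid Y}$ and of $\overline{S}_{U\mid V}$ has at least one nonzero entry, because the corresponding real columns sum to one.
\end{itemize}

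The main obstacle is definitional rather than computational. The paper has so far only described the possibilistic setting informally, so we must fix what ``possibilistically noncontextual'' means: the existence of a Boolean admissible family and a Boolean response matrix with $\overline{E}_m=\overline{D}_{m\mid Y}\overline{S}_Y$.

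We will adopt the natural translation of Definitions~\ref{def:input-independence}, \ref{def:compositionality} and~\ref{def:prep-cont}. Stochastic matrices become Boolean matrices with no zero column, distributions $\mu$ become nonempty subsets, and products become Boolean products. With this convention, the verification above shows that noncontextuality in the probabilistic sense implies noncontextuality in the possibilistic sense. The theorem follows by contraposition.
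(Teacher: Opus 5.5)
Your argument is correct and follows the same route as the paper: by contraposition, you take supports of a probabilistic witness $D_{m\mid Y}$, $\{S_{U\mid V}\}$ to obtain a possibilistic one. You also make explicit what the paper's one-line proof leaves implicit, namely that the support map is a semiring homomorphism (so $\overline{AB}=\overline{A}\,\overline{B}$) and that the reduced extension family remains admissible with nonempty columns.
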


\begin{proof}
Any probabilistic global response matrix induces a possibilistic one by taking supports: replacing each positive entry by $1$ and each zero by $0$. Hence the existence of a probabilistic noncontextual model would imply the existence of a possibilistic one.
The contrapositive yields the claim.
\end{proof}

\subsection{Example: PBR-type preparation scenario}
We now demonstrate that the PBR-type preparation scenario constructed earlier is preparation contextual.
Here we outline the proof and refer the interested reader to  Appendix~\ref{app:pbr-proof} for the details.

Recall from Eq.~\eqref{eq:E-PBR} that this empirical model $E$ takes the explicit form,
\begin{equation}
E =
\left(
\begin{array}{cccc|cccc|cccc|cccc}
0 & \tfrac12 & \tfrac12 & 0 &
\tfrac14 & \tfrac14 & \tfrac14 & \tfrac14 &
\tfrac14 & \tfrac14 & \tfrac14 & \tfrac14 &
\tfrac12 & 0 & 0 & \tfrac12 \\

\tfrac14 & \tfrac14 & \tfrac14 & \tfrac14 &
0 & \tfrac12 & \tfrac12 & 0 &
\tfrac12 & 0 & 0 & \tfrac12 &
\tfrac14 & \tfrac14 & \tfrac14 & \tfrac14 \\

\tfrac14 & \tfrac14 & \tfrac14 & \tfrac14 &
\tfrac12 & 0 & 0 & \tfrac12 &
0 & \tfrac12 & \tfrac12 & 0 &
\tfrac14 & \tfrac14 & \tfrac14 & \tfrac14 \\

\tfrac12 & 0 & 0 & \tfrac12 &
\tfrac14 & \tfrac14 & \tfrac14 & \tfrac14 &
\tfrac14 & \tfrac14 & \tfrac14 & \tfrac14 &
0 & \tfrac12 & \tfrac12 & 0
\end{array}
\right).
\end{equation}
By Theorem~\ref{thm:product-prior}, any admissible family of extension matrices has product form of Eq.~\eqref{eq:product-prior2} for some choice of single-site distributions $\{\mu_p\}_{p\in Y}$.
It turns out that taking each $\mu_p$ to be uniform yields an admissible extension structure that satisfies Eq.~\eqref{eq:prep-comp}.
Thus, the scenario is preparation compatible, and hence the question of preparation contextuality is nontrivial.
It remains to show that no global response matrix $D_{m\mid Y}$ can reproduce $E$ via Eq.~\eqref{eq:prep-nc} for any admissible extension matrix.

Assume, for contradiction, that there exists a global response matrix $D_{m\mid Y}$ together with a product-form extension matrix reproducing the empirical statistics via Eq.~\eqref{eq:prep-nc}.
Then, note that for each preparation instance in the model, there exists exactly one impossible outcome.
Because the extension matrix assigns a nonzero weight to every globally consistent refinement of a local preparation choice, each forbidden outcome at the empirical level propagates to a forbidden outcome at the level of the corresponding global preparation instances.
Hence, the pattern of zeros in the empirical matrices imposes stringent support constraints on the columns of $D_{m\mid Y}$.

The crucial observation is that these constraints are incompatible with the requirement that every column of $D_{m\mid Y}$ represent a valid probability distribution.
A parity-based combinatorial argument, as detailed in Appendix~\ref{app:pbr-proof}, shows that for any choice of single-site extension distributions $\{\mu_p\}$, at least one global preparation instance is forced to forbid all four measurement outcomes simultaneously, which is impossible.
Hence, no global response matrix exists for any admissible extension family, and the empirical model is preparation contextual.

Conceptually, the argument proceeds in three steps.
First, the presence of a single forbidden outcome in each empirical column allows one to associate with every local preparation choice a unique outcome that cannot occur.
Second, the product-form extension matrix ensures that whenever a global preparation instance is compatible with a given local choice, the corresponding forbidden outcome is also forbidden globally.
This propagates the empirical zero constraints to the global response matrix.

Finally, one examines the pattern of propagated constraints across the four source contexts.
For global preparation instances of even parity, the forbidden outcomes arising from the four contexts are all distinct, so that all four measurement outcomes are impossible simultaneously.
For odd parity, fewer outcomes are forbidden, but the remaining support is still insufficient to satisfy consistency with all empirical columns.
In both cases, one obtains a contradiction, showing that no global response matrix can exist.

\section{Discussion and Outlook}

We have introduced a preparation-dual notion of contextuality formulated as an obstruction to stochastic extension.
In direct analogy with the sheaf-theoretic formulation of measurement contextuality~\cite{Abramsky2011}, preparation contextuality arises when locally specified statistics cannot be extended to a single global response matrix compatible with all source contexts.

To convey the intuition underlying our framework, we draw a conceptual parallel with sheaf-theoretic measurement contextuality.
The framework of sheaf-theoretic measurement contextuality rests on the following noncontextuality assumption: There is a ``Platonic'' set of simultaneous global outcome assignments, and each local outcome (section) stems from one such global assignment through a restriction map so that locality is obtained by forgetting the unobserved events.
The restrictions have linear-algebraic representations via incidence matrices.
Importantly, the local outcomes are observed noncontextually, i.e., they do not depend on the contexts in which they were obtained.
In the sheaf-theoretic picture, the latter is captured by the compatibility property of sections, which manifests in the transitivity of the incidence matrices.
This formulation places sheaf-theoretic contextuality within a linear-algebraic framework.
With the noncontextuality assumption built into the incidence matrix, a family of local data (an empirical measurement model) is measurement noncontextual if it can be recovered from a single global section via these matrices.

In our preparation contextuality framework, we assumed Platonic simultaneous global instances.
An extension matrix then lifts distributions induced by instances within contexts to global distributions induced by global instances where some preparation instances are left free. 
We thus needed to encode a noncontextuality assumption into the extension matrices by consistently completing the unspecified instance distributions. 
We achieved this by requiring the following, inspired by the measurement case: (i) that local distributions be independent of the contexts in which they are assigned (input independence), thereby excluding spurious correlations between independently specified sources; and (ii) that the extension matrices be transitive (compositionality), thereby enforcing consistency under successive extensions.
These conditions ensure that extensions behave as a coherent refinement procedure. 
The key distinction, however, is that the extension matrices are not canonical, owing to their stochastic nature and the freedom to choose different distributions for preparation instances.
With extension matrices encoding the noncontextuality assumption, a family of local data (an empirical preparation model) is preparation noncontextual if it can be recovered from these matrices via a global response matrix.

We have shown that, by exploiting a linear-algebraic perspective, the sheaf-theoretic notion of contextuality can be meaningfully extended to preparations, which
captures a structural incompatibility between local preparation descriptions and global response models.
This parallel enabled us to leverage the possibilistic contextuality of the measurement scenario in the case of preparations.
A natural question is thus the extent to which this parallelism can be sustained.
For example, a deeper structural question concerns the existence of a preparation-analogue of Fine’s theorem~\cite{Fine1982}.
In the measurement case, Abramsky and Brandenburger showed that the existence of a global section is equivalent to the existence of a deterministic hidden-variable model~\cite{Abramsky2011}.
This result provides a precise bridge between the sheaf-theoretic and ontological-model formulations of contextuality.
An analogous result would clarify the exact relationship between the sheaf-theoretic preparation contextuality introduced here and generalized preparation contextuality~\cite{Spekkens2005}.

More broadly, the extension-based perspective suggests several avenues for further investigation.
First, one may seek a cohomological characterisation of preparation contextuality paralleling developments for measurement contextuality~\cite{AbramskyMansfield2011}.
Second, the linear-algebraic structure of the obstruction invites algorithmic analysis:
preparation noncontextuality reduces to a feasibility problem for stochastic matrices subject to linear constraints.
Third, it would be natural to investigate whether a unified categorical or sheaf-theoretic treatment of preparation and measurement contextuality could illuminate deeper dualities between restriction and extension.

\bibliography{arXiv-quantum}
\bibliographystyle{apsrev4-1}

\begin{widetext}

\appendix

\setcounter{equation}{0}
\renewcommand{\theequation}{\thesection.\arabic{equation}}

\newpage

\section{Sheaf-Theoretic Formulation of Preparation Contextuality}
\label{app:stf}

In this appendix we recast the preparation framework developed in the main text in the language of sheaf theory, following the general structure of Abramsky and Brandenburger~\cite{Abramsky2011}, but adapting it to the preparation-dual setting. The key distinction is that, whereas measurement contextuality is governed by canonical restriction maps, the preparation setting requires stochastic extension maps as additional structure.

\medskip

We begin by recalling the measurement case. Let $X$ be a finite set of measurements with outcome set $O$, and let $\mathcal{C} \subseteq \mathcal{P}(X)$ be a measurement cover. The event assignment is given by a presheaf
\[
\mathcal{E} : \mathcal{P}(X)^{\mathrm{op}} \to \mathbf{Set}, 
\qquad U \mapsto O^{U},
\]
with restriction maps $\mathrm{res}^U_V := \mathcal{E}(V \subseteq U)$ given by projection. Composing with the distribution functor over a commutative semiring $R$, one obtains a presheaf
\[
\mathcal{D}_R \mathcal{E} : \mathcal{P}(X)^{\mathrm{op}} \to \mathbf{Set}, 
\qquad U \mapsto \mathcal{D}_R(O^{U}),
\]
whose morphisms are given by marginalisation. In the probabilistic setting one takes $R=\mathbb{R}_{\ge 0}$, while $R=\mathbb{B}$ yields the possibilistic case.

An empirical model is a family $\{ e_C \in \mathcal{D}_R \mathcal{E}(C) \}_{C \in \mathcal{C}}$ satisfying compatibility on overlaps,
\[
\mathcal{D}_R(\mathrm{res}^{C}_{C \cap C'})(e_C)
=
\mathcal{D}_R(\mathrm{res}^{C'}_{C \cap C'})(e_{C'}).
\]
Measurement noncontextuality is equivalent to the existence of a global section $d \in \mathcal{D}_R \mathcal{E}(X)$ such that $\mathrm{res}^X_C(d) = e_C$ for all $C \in \mathcal{C}$.

\medskip

We now turn to the preparation scenario. Let $Y$ be a finite set of sources, each with instance set $I$, and let $\mathcal{S} \subseteq \mathcal{P}(Y)$ be a family of source contexts. We define the instance presheaf
\[
\mathcal{I} : \mathcal{P}(Y)^{\mathrm{op}} \to \mathbf{Set}, 
\qquad U \mapsto I^{U},
\]
with restriction maps $\mathrm{res}^U_V := \mathcal{I}(V \subseteq U)$.

\medskip

To capture the preparation structure, we introduce stochastic extension maps. These are naturally described using the Kleisli category of the distribution monad $\mathcal{D}_R$~\cite{Jacobs2015}. Morphisms in this category are stochastic maps
\[
f : A \to \mathcal{D}_R(B),
\]
which may be written as conditional distributions $f(b \mid a)$. Given $f : A \to \mathcal{D}_R(B)$ and $g : B \to \mathcal{D}_R(C)$, their composition is defined by
\[
(g \star f)(c \mid a)
=
\sum_{b \in B} g(c \mid b)\, f(b \mid a).
\]

An extension structure is defined as a covariant functor
\[
\mathcal{S} : \mathcal{P}(Y) \to \mathbf{Kl}(\mathcal{D}_R), 
\qquad U \mapsto I^{U},
\]
assigning to each inclusion $V \subseteq U$ a stochastic map
\[
\mathrm{ext}^V_U := \mathcal{S}(V \subseteq U) : I^V \to \mathcal{D}_R(I^U),
\]
such that for all $W \subseteq V \subseteq U$,
\[
\mathrm{ext}^W_U
=
\mathrm{ext}^V_U \star \mathrm{ext}^W_V.
\]
This is the \textbf{compositionality} condition.

\medskip

We say that the extension structure is \textbf{input-independent} if, for each $V \subseteq U$, there exists a distribution $\mu_{U \setminus V} \in \mathcal{D}_R(I^{U \setminus V})$ such that
\[
\mathrm{ext}^V_U(\sigma_U \mid \sigma_V)
=
\mathbf{1}\!\left[(\sigma_U)|_V = \sigma_V\right]\,
\mu_{U \setminus V}((\sigma_U)|_{U \setminus V}).
\]
In particular, this implies that extending and then restricting recovers the original assignment, i.e.
\[
\mathcal{D}_R(\mathrm{res}^U_V)\circ \mathrm{ext}^V_U = \delta_{I^V}.
\]

\medskip

An empirical preparation model assigns to each source context $\Gamma \in \mathcal{S}$ a map
\[
e^{\Gamma} : I^{\Gamma} \to \mathcal{D}_R(O).
\]
We say that the empirical model is \textbf{preparation compatible} if there exists an admissible extension structure $\mathcal S$ such that, for all $\Gamma,\Gamma'\in\mathcal S$,
\[
e^\Gamma \star \mathrm{ext}^{\Gamma\cap\Gamma'}_\Gamma
=
e^{\Gamma'} \star \mathrm{ext}^{\Gamma\cap\Gamma'}_{\Gamma'} .
\]

\medskip

A global section is a map
\[
d : I^{Y} \to \mathcal{D}_R(O).
\]
We say that the empirical model is \textbf{preparation noncontextual} if there exists an admissible extension structure $\mathcal{S}$ and a global section $d$ such that
\[
e^{\Gamma}
=
d \star \mathrm{ext}^{\Gamma}_{Y},
\quad \Gamma \in \mathcal{S}.
\]
If no such global section exists, the empirical model is preparation contextual; in the preparation compatible case, this gives a nontrivial obstruction beyond failure of compatibility.

\medskip

Finally, under compositionality and input independence, the extension maps necessarily take the product form established in Theorem~\ref{thm:product-prior}:
\[
\mathrm{ext}^V_U(\sigma_U \mid \sigma_V)
=
\mathbf{1}\!\left[(\sigma_U)|_V = \sigma_V\right]
\prod_{p \in U \setminus V} \mu_p(\sigma_p).
\]

\section{PBR-type example: full proof}
\label{app:pbr-proof}

In this appendix we give a complete technical proof that the quantum model discussed in the main text is preparation contextual in the sense of Definition~4.

We first exhibit an admissible extension family satisfying preparation compatibility, showing that the empirical model admits a coherent extension structure and that the question of preparation contextuality is nontrivial. This is achieved by taking the single-site distributions $\{\mu_p\}_{p\in Y}$ to be uniform.

It remains to show that no admissible extension family admits a global response matrix reproducing the empirical model. We in fact prove the stronger statement that no \emph{possibilistic} global response matrix exists. Preparation contextuality in the probabilistic sense then follows from Theorem~2.

We work with the set of sources
\[
Y=\{a,b,a',b'\},
\]
each admitting binary instances \(I=\{0,1\}\). The implementable source contexts are
\[
\mathcal S=
\bigl\{
\{a,b\},\{a,b'\},\{a',b\},\{a',b'\}
\bigr\}.
\]
A global instance assignment is a tuple
\[
\sigma_Y=(a,b,a',b')\in I^4.
\]

By Theorem~1, any admissible family of extension matrices has product form. Thus, for each \(p\in Y\), fix a single-site distribution \(\mu_p\in\Delta(I)\), and write
\[
S_{Y\mid \Gamma}(\sigma_Y\mid \sigma_\Gamma)
=
\mathbf 1[(\sigma_Y)|_\Gamma=\sigma_\Gamma]
\prod_{p\in Y\setminus\Gamma}\mu_p(\sigma_p).
\]
The special case in which every \(\mu_p\) is uniform gives an admissible extension family. A direct computation using the empirical model below shows that this uniform extension family satisfies Eq.~(25). Hence the empirical model is preparation compatible.

It remains to show that no admissible extension family admits a global response matrix. Since every admissible extension family has product form, it suffices to rule out global response matrices for arbitrary choices of the single-site distributions \(\{\mu_p\}_{p\in Y}\).

Columns of empirical matrices are ordered in four consecutive blocks corresponding to the contexts
\[
\{a,b\},\quad \{a,b'\},\quad \{a',b\},\quad \{a',b'\},
\]
with lexicographic order \((0,0),(0,1),(1,0),(1,1)\) within each block. Rows are ordered by outcomes \(o\in\{1,2,3,4\}\), and rows of extension and global response matrices are ordered lexicographically by \(\sigma_Y\in I^4\).

With these conventions, the empirical model \(E\) takes the explicit form
\[
E =
\left(
\begin{array}{cccc|cccc|cccc|cccc}
0 & \tfrac12 & \tfrac12 & 0 &
\tfrac14 & \tfrac14 & \tfrac14 & \tfrac14 &
\tfrac14 & \tfrac14 & \tfrac14 & \tfrac14 &
\tfrac12 & 0 & 0 & \tfrac12 \\

\tfrac14 & \tfrac14 & \tfrac14 & \tfrac14 &
0 & \tfrac12 & \tfrac12 & 0 &
\tfrac12 & 0 & 0 & \tfrac12 &
\tfrac14 & \tfrac14 & \tfrac14 & \tfrac14 \\

\tfrac14 & \tfrac14 & \tfrac14 & \tfrac14 &
\tfrac12 & 0 & 0 & \tfrac12 &
0 & \tfrac12 & \tfrac12 & 0 &
\tfrac14 & \tfrac14 & \tfrac14 & \tfrac14 \\

\tfrac12 & 0 & 0 & \tfrac12 &
\tfrac14 & \tfrac14 & \tfrac14 & \tfrac14 &
\tfrac14 & \tfrac14 & \tfrac14 & \tfrac14 &
0 & \tfrac12 & \tfrac12 & 0
\end{array}
\right).
\]
Each column contains exactly one zero entry. Let \(\overline E\) denote the possibilistic reduction of \(E\), obtained by mapping \(0\mapsto 0\) and \((0,1]\mapsto 1\). Let \(\overline S_Y\) denote the possibilistic reduction of the stacked extension matrix \(S_Y\).

Since every column of \(\overline E\) has a unique zero entry, there exists a unique function
\[
\varphi:\{1,\dots,16\}\to\{1,2,3,4\}
\]
such that
\[
\overline E(\varphi(j),j)=0
\]
for each column \(j\).

\begin{lemma}
If \(\overline E=\overline D\,\overline S_Y\), then
\[
\overline S_Y(k,j)=1
\quad\Longrightarrow\quad
\overline D(\varphi(j),k)=0.
\]
\end{lemma}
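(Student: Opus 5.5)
The plan is to unfold Boolean matrix multiplication at the single entry $(\varphi(j),j)$ and use the fact that in the Boolean semiring a disjunction vanishes only if every term does. I take $\overline D$ to be a $4\times 16$ Boolean matrix, with rows indexed by outcomes and columns by global instances $\sigma_Y\in I^4$. I take $\overline S_Y$ to be the $16\times 16$ stacked Boolean extension matrix, with rows indexed by global instances $k$ and columns by local sections $j$. Then the hypothesis $\overline E=\overline D\,\overline S_Y$ reads, entrywise,
\[
\overline E(o,j)=\bigvee_{k}\overline D(o,k)\wedge \overline S_Y(k,j).
\]
I will state and check this convention explicitly first. It is the only real point of care, since the text writes $E_{m\mid\Gamma}=D_{m\mid Y}S_{Y\mid\Gamma}$ and the inner index must be the global instance.

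Next I specialise to the row $o=\varphi(j)$. By the definition of $\varphi$, the left-hand side is $\overline E(\varphi(j),j)=0$. A Boolean join equals $0$ iff each conjunct $\overline D(\varphi(j),k)\wedge\overline S_Y(k,j)$ is $0$ for every $k$.

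Finally I fix $k$ with $\overline S_Y(k,j)=1$. The conjunct vanishes only if $\overline D(\varphi(j),k)=0$, which is the claim.

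There is essentially no obstacle; the argument is purely about supports. Positivity of the $\mu_p$ is not needed for the lemma itself; it only matters later, when one decides which entries of $\overline S_Y$ equal $1$. I would add one remark for the probabilistic reading. If $E=DS_Y$ with nonnegative entries, then $0=E(\varphi(j),j)=\sum_k D(\varphi(j),k)S_Y(k,j)$ is a sum of nonnegative terms. Hence each term vanishes, which gives the same implication, consistent with reducing to supports as in Theorem~\ref{thm:poss-prob}.
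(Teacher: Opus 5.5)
Your proposal is correct and follows essentially the same argument as the paper. Both unfold the Boolean product at entry $(\varphi(j),j)$ and use $\overline E(\varphi(j),j)=0$ to force every conjunct $\overline D(\varphi(j),k)\wedge\overline S_Y(k,j)$ to vanish, hence $\overline D(\varphi(j),k)=0$ whenever $\overline S_Y(k,j)=1$; your explicit statement of the index convention, with the inner index running over global instances, is a useful clarification that the paper leaves implicit.
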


\begin{proof}
If \(\overline S_Y(k,j)=1\) and \(\overline E(i,j)=0\), Boolean matrix multiplication forces \(\overline D(i,k)=0\). Taking \(i=\varphi(j)\) gives the claim.
\end{proof}

For each global assignment index \(k\), define
\[
J(k):=\{j:\overline S_Y(k,j)=1\}.
\]

\begin{lemma}
If \(\overline E=\overline D\,\overline S_Y\), then
\[
\operatorname{supp}(\overline D(:,k))
\subseteq
\{1,2,3,4\}\setminus \varphi(J(k)),
\]
and hence
\[
|\operatorname{supp}(\overline D(:,k))|
\le
4-|\varphi(J(k))|.
\]
\end{lemma}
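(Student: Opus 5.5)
The plan is to derive the statement directly from the preceding lemma by quantifying over every column index in $J(k)$. We fix a global assignment index $k$ and assume $\overline E=\overline D\,\overline S_Y$. For each $j\in J(k)$ we have, by definition, $\overline S_Y(k,j)=1$. The preceding lemma then gives $\overline D(\varphi(j),k)=0$. Consequently, every outcome $i\in\varphi(J(k))=\{\varphi(j):j\in J(k)\}$ satisfies $\overline D(i,k)=0$, so $i\notin\operatorname{supp}(\overline D(:,k))$. This yields the first inclusion,
\[
\operatorname{supp}(\overline D(:,k))\subseteq\{1,2,3,4\}\setminus\varphi(J(k)).
\]

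For the cardinality bound, we note that $\varphi(J(k))\subseteq\{1,2,3,4\}$. The complement therefore has exactly $4-|\varphi(J(k))|$ elements, and monotonicity of cardinality under inclusion gives the stated inequality. One point to record: $\varphi$ need not be injective on $J(k)$, so the bound is stated in terms of the image $\varphi(J(k))$ rather than $|J(k)|$. This is exactly why we work with the image set.

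There is no real obstacle here. The only care needed is to make sure the preceding lemma is applied uniformly in $j$ for the fixed $k$. Its Boolean-semiring reasoning is that a sum of products equals $0$ only if every term is $0$, i.e.
\[
\overline E(i,j)=\bigvee_{k'}\overline D(i,k')\wedge\overline S_Y(k',j).
\]
This is already encapsulated in that lemma, so the present statement is a direct corollary. The substantive work, computing $J(k)$ under the product form of Theorem~\ref{thm:product-prior} and the parity analysis of $\varphi(J(k))$, is deferred to the subsequent steps of the appendix.
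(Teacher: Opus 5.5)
Your proof is correct and follows the paper's approach: you apply the preceding lemma to each $j\in J(k)$ to exclude $\varphi(j)$ from the support of column $k$. The cardinality bound then follows from $\varphi(J(k))\subseteq\{1,2,3,4\}$, and stating it in terms of the image $\varphi(J(k))$ rather than $|J(k)|$ is the right bookkeeping.
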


\begin{proof}
Each \(j\in J(k)\) forbids the outcome \(\varphi(j)\) in column \(k\) by the previous lemma.
\end{proof}

The required combinatorial fact is the following.

\begin{lemma}
Suppose \(J(k)\) contains one column from each context block, and let the corresponding global assignment be
\[
(a,b,a',b')\in I^4.
\]
If
\[
a\oplus b\oplus a'\oplus b'=0,
\]
then
\[
\varphi(J(k))=\{1,2,3,4\}.
\]
If
\[
a\oplus b\oplus a'\oplus b'=1,
\]
then
\[
|\varphi(J(k))|=2.
\]
\end{lemma}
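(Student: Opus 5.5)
The plan is a direct case analysis. First I read off the zero function $\varphi$ block by block from the explicit matrix $E$, and express it in terms of the XOR of the two instance labels in each context. Then I compute $\varphi(J(k))$ as a union of four such values.

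\textbf{Step 1: $J(k)$ and $\varphi$ on each block.} By the product form of Theorem~\ref{thm:product-prior}, $\overline S_Y(k,j)=1$ forces the column $j$ in block $\Gamma$ to be the local section $(\sigma_Y)|_\Gamma$. So under the hypothesis, $J(k)$ consists precisely of the four columns $(a,b)$, $(a,b')$, $(a',b)$ and $(a',b')$ in their respective blocks. Inspecting $E$ column by column, the unique zero sits as follows.
\begin{itemize}
\item In block $\{a,b\}$, the zero is in row $1$ when $a\oplus b=0$ and in row $4$ when $a\oplus b=1$.
\item In block $\{a,b'\}$, it is in row $2$ when $a\oplus b'=0$ and in row $3$ otherwise.
\item In block $\{a',b\}$, it is in row $3$ when $a'\oplus b=0$ and in row $2$ otherwise.
\item In block $\{a',b'\}$, it is in row $4$ when $a'\oplus b'=0$ and in row $1$ otherwise.
\end{itemize}
So each block depends only on the corresponding pairwise XOR.

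\textbf{Step 2: pair the blocks.} Write $x_1=a\oplus b$, $x_2=a\oplus b'$, $x_3=a'\oplus b$ and $x_4=a'\oplus b'$. Blocks $\{a,b\}$ and $\{a',b'\}$ both take values in $\{1,4\}$, but with opposite conventions. Hence together they contribute $\{1,4\}$ when $x_1=x_4$, and a single element when $x_1\neq x_4$. Similarly, blocks $\{a,b'\}$ and $\{a',b\}$ take values in $\{2,3\}$ with opposite conventions. They contribute $\{2,3\}$ when $x_2=x_3$, and a single element otherwise.

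\textbf{Step 3: relate both pairings to the global parity.} The key identity is
\[
x_1\oplus x_4 = a\oplus b\oplus a'\oplus b' = x_2\oplus x_3 .
\]
If the global parity is $0$, then $x_1=x_4$ and $x_2=x_3$, so $\varphi(J(k))=\{1,4\}\cup\{2,3\}=\{1,2,3,4\}$. If the parity is $1$, each pair collapses to a single element, one from $\{1,4\}$ and one from $\{2,3\}$. These are disjoint, so $|\varphi(J(k))|=2$.

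The only real obstacle is bookkeeping: correctly reading the zero positions from $E$ with the stated column ordering $(0,0),(0,1),(1,0),(1,1)$ and rows $1$ to $4$. Noticing the opposite conventions within each pair of blocks is what makes the parity criterion appear. I would double-check by evaluating $\sigma_Y=(0,0,0,0)$ (parity $0$, giving $\varphi(J(k))=\{1,2,3,4\}$) and $\sigma_Y=(1,0,0,0)$ (parity $1$, giving $\varphi(J(k))=\{4,3\}$).
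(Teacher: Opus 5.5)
Your proof is correct and takes essentially the paper's approach. The paper justifies this lemma only by ``direct inspection of the zero pattern of $E$''. Your block-by-block reading of $\varphi$ in terms of the pairwise XORs $x_1,\dots,x_4$ is exactly that inspection made explicit, and every zero position you read off checks out against $E$. The identity $x_1\oplus x_4=x_2\oplus x_3=a\oplus b\oplus a'\oplus b'$ then correctly yields both the even-parity and odd-parity cases.
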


\begin{proof}
This follows by direct inspection of the zero pattern of the empirical model \(E\).
\end{proof}

\begin{theorem}
There exists no possibilistic matrix \(\overline D\) with nonempty columns such that
\[
\overline E=\overline D\,\overline S_Y
\]
for any choice of single-site distributions \(\{\mu_p\}_{p\in Y}\).
\end{theorem}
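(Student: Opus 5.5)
The plan is to use Theorem~\ref{thm:product-prior} to reduce the possibilistic extension matrix $\overline S_Y$ to pure support data. For each source $p\in Y$ let $T_p:=\operatorname{supp}(\mu_p)\subseteq I$; since $\mu_p$ is a probability distribution, $T_p\neq\emptyset$. By Eq.~\eqref{eq:product-prior2}, the entry $\overline S_{Y\mid\Gamma}(\sigma_Y\mid\sigma_\Gamma)$ equals $1$ exactly when two conditions hold: $(\sigma_Y)|_\Gamma=\sigma_\Gamma$, and $\sigma_p\in T_p$ for every $p\in Y\setminus\Gamma$.

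Call a global assignment $k=\sigma_Y$ \emph{admissible} if $\sigma_p\in T_p$ for all four $p\in Y$. The first step is to check that for admissible $k$ the set $J(k)$ contains exactly one column from each of the four context blocks, namely the restriction $(\sigma_Y)|_\Gamma$. Indeed, the condition on sources outside $\Gamma$ holds automatically for admissible $k$, and the restriction is unique. So the hypothesis of the preceding combinatorial lemma is met for every admissible $k$. Admissible assignments exist because every $T_p$ is nonempty.

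We then split into two cases.

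\emph{Case 1: some admissible $k$ has even parity.} The combinatorial lemma gives $\varphi(J(k))=\{1,2,3,4\}$. The support lemma then forces $\operatorname{supp}(\overline D(:,k))=\emptyset$, contradicting the requirement that $\overline D$ have nonempty columns.

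\emph{Case 2: every admissible $k$ has odd parity.} If some $T_p$ contained both $0$ and $1$, flipping $\sigma_p$ in an admissible assignment would produce an admissible assignment of even parity, which is Case~1. Hence every $T_p=\{s_p\}$ is a singleton, each $\mu_p$ is a point mass, and $s_a\oplus s_b\oplus s_{a'}\oplus s_{b'}=1$.

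To finish Case 2, consider the column $j$ of the $\{a,b\}$ block indexed by $(s_a,s_b)$. Since $\mu_{a'}$ and $\mu_{b'}$ are point masses, the only $k$ with $\overline S_Y(k,j)=1$ is the admissible assignment $k^*=(s_a,s_b,s_{a'},s_{b'})$. Boolean multiplication therefore gives $\overline E(:,j)=\overline D(:,k^*)$. Every column of $\overline E$ has exactly three nonzero entries, so $\overline D(:,k^*)$ must have support of size $3$. But $k^*$ is admissible with odd parity, so the combinatorial lemma gives $|\varphi(J(k^*))|=2$, and the support lemma bounds the support of $\overline D(:,k^*)$ by $4-2=2$. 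This is a contradiction.

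The step I expect to be the main obstacle is Case~2. The parity lemma alone does not exclude odd-parity columns, so one has to notice that odd parity of all admissible assignments forces deterministic $\mu_p$. That in turn makes the Boolean product collapse to a single column of $\overline D$, which is where the counting contradiction comes from. Both cases rest on the preceding combinatorial lemma, whose zero-pattern inspection of $E$ I take as given.
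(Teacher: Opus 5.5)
Your proof is correct and follows the paper's argument essentially step for step. It uses the same even/odd parity case split on assignments in $\operatorname{supp}(\mu_a)\times\cdots\times\operatorname{supp}(\mu_{b'})$, the same observation that all-odd parity forces every $\mu_p$ to be a point mass, and the same contradiction between the support size $3$ forced by $\overline E(:,j)=\overline D(:,k^\ast)$ and the bound $4-2=2$ from the parity lemma. Your explicit choice of the $\{a,b\}$-block column $(s_a,s_b)$, whose only supporting global assignment is $k^\ast$, is just a more precise rendering of the paper's phrase ``$k_\ast$ is the only supported global assignment.''
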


\begin{proof}
Assume, for contradiction, that such a \(\overline D\) exists.

Choose any
\[
(a,b,a',b')
\in
\operatorname{supp}(\mu_a)\times
\operatorname{supp}(\mu_b)\times
\operatorname{supp}(\mu_{a'})\times
\operatorname{supp}(\mu_{b'}).
\]
Let \(k\) be the corresponding global assignment index. Since each selected value lies in the support of the corresponding single-site distribution, the compatible completion of each local context has nonzero weight. Hence \(J(k)\) contains one column from each of the four context blocks.

If the tuple \((a,b,a',b')\) has even parity, the previous lemma gives
\[
\varphi(J(k))=\{1,2,3,4\}.
\]
Therefore
\[
\operatorname{supp}(\overline D(:,k))=\varnothing,
\]
contradicting the requirement that columns of \(\overline D\) be nonempty.

It remains to consider the case in which every tuple in
\[
\operatorname{supp}(\mu_a)\times
\operatorname{supp}(\mu_b)\times
\operatorname{supp}(\mu_{a'})\times
\operatorname{supp}(\mu_{b'})
\]
has odd parity. This can happen only if each \(\operatorname{supp}(\mu_p)\) is a singleton; otherwise, changing one supported bit while keeping the others fixed would produce a tuple of opposite parity. Hence there is a unique global assignment \(k_\ast\) with nonzero support.

For this assignment, the parity lemma gives
\[
|\varphi(J(k_\ast))|=2.
\]
Thus
\[
|\operatorname{supp}(\overline D(:,k_\ast))|\le 2.
\]
However, since \(k_\ast\) is the only supported global assignment, Boolean consistency with any empirical column \(j\in J(k_\ast)\) requires
\[
\overline E(:,j)=\overline D(:,k_\ast).
\]
Each such empirical column has support of size \(3\), since every column of \(E\) contains exactly one zero. Hence
\[
|\operatorname{supp}(\overline D(:,k_\ast))|\ge 3,
\]
a contradiction.
\end{proof}

Thus no possibilistic global response matrix exists for any admissible extension family. By Theorem~\ref{thm:poss-prob}, no probabilistic global response matrix exists either. Since compatibility was established above, this gives a nontrivial instance of preparation contextuality.

\end{widetext}

\end{document}